\documentclass[english, letterpaper, 10 pt, conference]{ieeeconf}
\usepackage[T1]{fontenc}
\usepackage[latin9]{inputenc}
\usepackage{amsmath}
\usepackage{cite}
\usepackage[caption=false]{subfig}
\usepackage{dsfont}
\usepackage{bm}
\usepackage{url}
\usepackage{balance}

\IEEEoverridecommandlockouts
\overrideIEEEmargins
\makeatletter

\newtheorem{defn}{Definition}

\usepackage{tikz}
\usepackage{tikzscale}
\usepackage{amssymb}

\usepackage{enumerate}   
\usepackage{graphicx} 
\usepackage[binary-units=true]{siunitx}
\usepackage{pgfplots}
\pgfplotsset{compat = 1.12}
\usepackage{cleveref}
\usepackage{algorithm}
\usepackage{algorithmicx}
\usepackage{algpseudocode} 
\usepackage{booktabs}
\usepackage{xspace}
%\usepackage{comment}
%\excludecomment{figure}
%\let\endfigure\relax
%%% UNCOMMENT TO COMPILE FIGURES AS PDFs
\usetikzlibrary{external}
\tikzexternalize[prefix=figures/tikz/]

\usepgfplotslibrary{groupplots}
\usetikzlibrary{matrix}

\newcounter{tagnumb}
\setcounter{tagnumb}{0}

\newcommand{\bT}{\mathbf T}
\newcommand{\bN}{\mathbf N}
\newcommand{\PP}{\mathcal P}
\newcommand{\QQ}{\mathcal Q}
\newcommand{\RR}{\mathcal R}
\newcommand*{\ev}{{ego}-vehicle\@\xspace}
\newcommand*{\ie}{\textit{i.e.}\@\xspace}
\newcommand*{\eg}{\textit{e.g.}\@\xspace}
\newcommand{\twodots}{\mathinner {\ldotp \ldotp}}
\newcommand{\threedots}{\dots}

\DeclareMathOperator{\adj}{adj}
\DeclareMathOperator{\Adj}{Adj}

\newtheorem{theorem}{Theorem}

\newtheorem{corollary}{Corollary}

\crefname{defn}{def.}{defs.}
\Crefname{defn}{Definition}{Definitions}
\crefname{algocf}{alg.}{algs.}
\Crefname{algocf}{Algorithm}{Algorithms}

\newcommand{\includegtikz}[2][]{ 
	\tikzsetnextfilename{#2}
	\includegraphics[#1]{figures/tikz/#2} }

\algrenewcommand\algorithmicindent{1.0em}%

\makeatother

\usepackage{babel}
\usepackage{diagbox}

\title{\LARGE \bf Partitioning of the Free Space-Time for On-Road Navigation of Autonomous Ground Vehicles}

\author{Florent Altch\'e$^{2,1}$ and Arnaud de La Fortelle$^{1}$% <-this % stops a space 
\thanks{$^{1}$ MINES ParisTech, PSL Research University, Centre for robotics, 60 Bd St Michel 75006 Paris, France {\tt\small [florent.altche, arnaud.de\_la\_fortelle] @mines-paristech.fr}}
\thanks{$^{2}$ \'Ecole des Ponts ParisTech, Cit\'e Descartes, 6-8 Av Blaise Pascal, 77455 Champs-sur-Marne, France}% 
}

\begin{document}

\maketitle
\thispagestyle{empty}
\pagestyle{empty}

\begin{abstract}In this article, we consider the problem of trajectory planning and control for on-road driving of an autonomous ground vehicle (AGV) in presence of static or moving obstacles. We propose a systematic approach to partition the collision-free portion of the space-time into convex sub-regions that can be interpreted in terms of relative positions with respect to a set of fixed or mobile obstacles. We show that this partitioning allows decomposing the NP-hard problem of computing an optimal collision-free trajectory, as a path-finding problem in a well-designed graph followed by a simple (polynomial time) optimization phase for any quadratic convex cost function. Moreover, robustness criteria such as margin of error while executing the trajectory can easily be taken into account at the graph-exploration phase, thus reducing the number of paths to explore.
\end{abstract}

\section{Introduction}
In order to drive on public roads, autonomous ground vehicles (AGVs) will be required to navigate efficiently inside a potentially dense flow of other vehicles with uncertain behaviors. For this reason, planning safe, efficient and dynamically feasible trajectories that can be safely followed by a low-level controller is a particularly important problem.

One of the difficulties of ``optimal'' trajectory planning for AGVs is that the presence of obstacles renders the search space non-convex, and multiple possible maneuver variants (for which there is at least one locally optimum trajectory~\cite{Bender2015}) exist, such as illustrated in \Cref{fig:example-sit}. At control level, tracking the computed trajectory may involve highly nonlinear vehicle dynamics  when nearing the handling limits (see, \eg,~\cite{Kong2015} for a review); in less demanding (low-slip) scenarios, simpler dynamic models can be used.

Because they allow simultaneous trajectory generation with obstacle avoidance and control computation, model predictive control (MPC) approaches have been very popular for AGVs (see, \eg,~\cite{Falcone2007b,Abbas2014}). However, real-time constraints usually force authors considering very precise dynamic models to choose a short (sub-second) prediction horizon, which may in turn cause the MPC problem to become infeasible, for instance when a new obstacle is detected with not enough time to stop. Even with simpler dynamic models, the non-convexity of the state-space renders continuous optimization techniques inefficient. For this reason, hierarchical frameworks~\cite{Cowlagi2012,qian2016hierarchical} have been proposed, in which a medium-term (up to a dozen seconds) planner generates a rough trajectory which is then refined by a short-term (sub-second to a few seconds) controller. Mixed-integer programming (MIP) methods are often used in medium-term trajectory planning to encode the discrete decisions arising from multiple maneuver choices~\cite{Schouwenaars2002,Anderson2011}, generalized as \textit{logical constraints} in~\cite{Qian2016}. However, MIP problems are known to be NP-hard~\cite{karp1972reducibility} and are therefore difficult to solve in real-time.

\begin{figure}
	\centering \includegtikz[width=0.8\columnwidth]{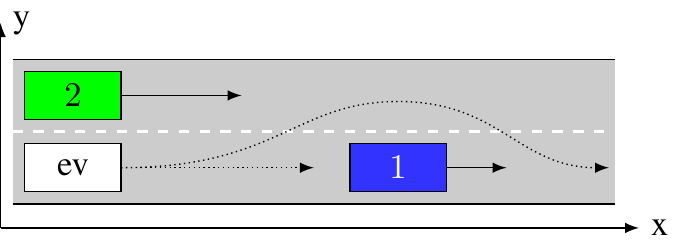}
	\caption{Example driving situation involving multiple maneuver choices for an AGV (denoted ev): overtake the slower (blue, denoted $1$) vehicle before the green vehicle ($2$) passes or wait behind the blue vehicle, possibly overtaking after the green vehicle has passed. Solid arrows represent the velocity of each vehicle, dotted arrows represent possible AGV trajectories. \label{fig:example-sit}}
\end{figure}

In this article, we propose a different approach for maneuver selection, inspired by the use of graph-based coordination of robots~\cite{Gregoire2014} and the decomposition of the collision-free space presented in~\cite{Park2015} for 2D path-planning. First, we introduce a systematic algorithm to partition the collision-free space-time into 3D regions with geometrical adjacency relations; the structure of on-road driving allows to assign a semantic interpretation to each partition subset. Using time discretization, we further divide these regions into convex polyhedrons, and design a \textit{transition graph} in which any path corresponds to a collision-free trajectory (that may however be dynamically infeasible). The main advantage of our approach is to reduce the entire combinatorial decision-making process (choosing from which side to avoid each obstacle) to the selection of a path in a graph. Once such a path has been selected, we show that computing a corresponding optimal trajectory (for a quadratic convex cost function) in an MPC fashion is widely simplified and can be performed in polynomial time. %Note, however, that the number of edges in the transition graph still does grow exponentially with the number of obstacles; future work will focus on exploration heuristics that can be used to avoid exploring inefficient paths.

A second advantageous property of our decomposition approach is to simplify the use of risk metrics, which can be directly taken into account at the graph exploration phase; in~\cite{Constantin2014}, the authors used a similar partitioning technique to design a ``space margin'' metric for 2D path planning. In this article, we introduce a complementary \textit{time margin} metric, corresponding to a temporal tolerance to execute a particular maneuver, that can be easily computed from our graph representation. This measure is related to the notion of ``gap acceptance'', commonly used in stochastic decision-making (see, \eg,~\cite{Lefevre2012}). We believe that combining a temporal margin (notably accounting for uncertainty in predicting the future trajectory of moving obstacles) as well as a spatial margin (accounting for perception and control errors) is key for trajectory planning and tracking in real-world situations, for instance coupled with MPC or Linear Quadratic Gaussian motion planning and control~\cite{VandenBerg2011}.

Our transition graph approach generalizes state-machine-based techniques~\cite{Wang2015,Wang2016} which rely on a predefined set of maneuvers (such as \textit{track lane} or \textit{change lane}) that needs to be manually adapted to the driving situation. By contrast, our method can be applied in many scenarios (including highway and urban driving, for instance crossing an intersection) with the same formalism. Although spatio-temporal graphs have already been used for the control of AGVs~\cite{Ono2015,TianyuGu2015}, no existing approach provides the same desirable properties, and notably to easily account for margins in planning.

%time validity is an important criterion in order to take into account actual and perceived risk, although additional metrics may be necessary to account for control error (such as the spatial safety margin defined in~\cite{Constantin2014}) or  uncertainty in the prediction of future obstacle trajectories.

The rest of this article is structured as follows: in \Cref{sec:example}, we present intuitions of our main ideas using the example scenario of \Cref{fig:example-sit}. In \Cref{sec:math}, we formalize these intuitions mathematically, and we present applications of our results to planning and control for autonomous ground vehicles in \Cref{sec:optimization}. In \Cref{sec:simulation}, we present early simulation results and data on computation time; finally, \Cref{sec:conclusion} concludes the study.

\section{A guiding example}\label{sec:example}
The goal of this section is to give an intuition of our main mathematical results using the example scenario shown in \Cref{fig:example-sit}; the formal mathematical theory is developed in the next section. In our example, we consider an autonomous ground vehicle (called \ev in the remainder of this article) navigating on a road with two other vehicles (obstacles); vehicles are modeled as rectangles driving parallel to the side of the road. Intuitively, the \ev has three classes of maneuvers to choose from: either it can remain behind vehicle $1$, overtake it before vehicle $2$ passes, or overtake it after vehicle $2$ has passed; in~\cite{Bender2015}, these maneuver choices are linked to the notion of homotopy classes of trajectories. Assuming that the future trajectory of the obstacles is known in advance, it is possible to compute the \textit{obstacle set} $\chi_o$ of $(x,y,t)$ positions of the \ev for which a collision exists at time $t$; the complement of this set is the \textit{collision-free} region of the space-time (or free space-time), denoted by $\chi_f$. Any collision-free trajectory for the \ev corresponds to a path in $\chi_f$; \Cref{fig:free-tspace} provides an illustration of the free space-time in our example.

\begin{figure}
	\centering \subfloat{\includegtikz[width=.7\columnwidth]{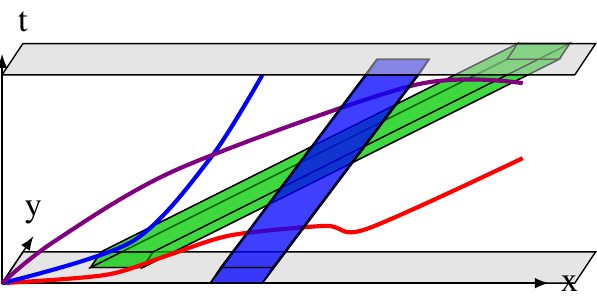}} \subfloat{\includegtikz[width=.25\columnwidth]{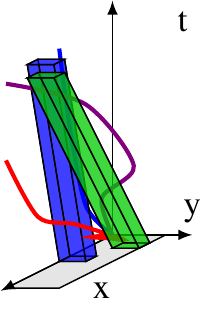}}
	\caption{Free space-time $\chi_f$ (in white) corresponding to the situation of \Cref{fig:example-sit}. Obstacles are pictured in the color of the corresponding vehicle. Light-gray planes represent the road extent in the $y$ direction. The thick curves represent possible collision-free trajectories for the \ev. \label{fig:free-tspace}}
\end{figure}

Due to the complex structure of the free space-time, notably its non-convexity, this abstraction is difficult to use directly to compute optimal collision-free trajectories. Inspired by the work in~\cite{Gregoire2014} and \cite{Park2015}, we propose a decomposition of $\chi_f$ in convex subregions with adjacency relations. First, we partition horizontal planes (corresponding to fixed time instants) using relative positions with respect to each obstacle as illustrated in \Cref{fig:partition}. Each subset of the partition corresponds to positions where the \ev is either located in front ($f$), to the left ($l$), behind ($b$) or to the right ($r$) of each obstacle. Using the additional information given by road boundaries, this partitioning technique yields four subsets denoted by $(lb)$, $(lf)$, $(br)$ and $(fr)$, indicating the relative position of the \ev from obstacle $1$ and $2$ in this order. We call these labels \textit{signature} of each subset. Additionally, for two such subsets $A$ and $B$ at a given time $t$, we can define an adjacency relation $\adj_t$ (related to that of~\cite{Park2015}), such that $\adj_t(A,B) = 1$ if the intersection of their closures is not empty, \ie $\bar A \cap \bar B \neq \emptyset$. %In this case, this intersection is a segment (or a point), and we call \textit{adjacency length} $\adjlen_t(A,B)$ be the length of this segment. This metric is a measure of the space margin allowed to the \ev when moving from cell $A$ to $B$.

\begin{figure}\centering
	\subfloat[$t = t_0$ : $\adj_{t_0}( lf, br ) = 1$, $\adj_{t_0}(lb, fr) = 0$ ]{\includegtikz[width=\columnwidth]{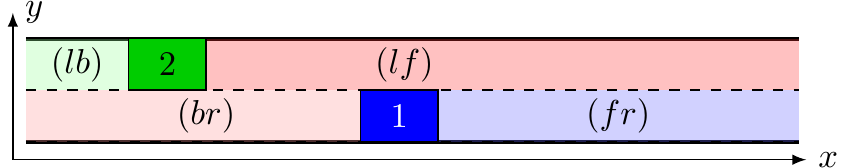}}
	
	\centering
	\subfloat[\label{fig:partition3b}$t = t_1$ : $\adj_{t_1}(br, lf) = \adj_{t_1}(lb, fr) = 0$]{\includegtikz[width=\columnwidth]{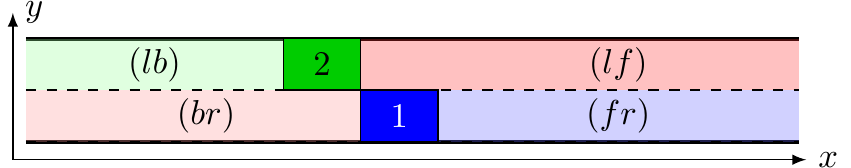}}
	
	\centering
	\subfloat[$t = t_2$ : $\adj_{t_2}(br, lf) = \adj_{t_2}(lb, fr) = 0$]{\includegtikz[width=\columnwidth]{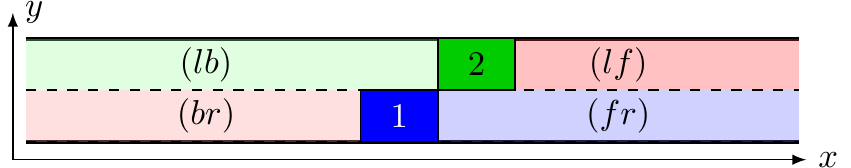}}
	
	\centering
	\subfloat[$t = t_3$ : $\adj_{t_3}( lf, br ) = 0$, $\adj_{t_3}(lb, fr) = 1$]{\includegtikz[width=\columnwidth]{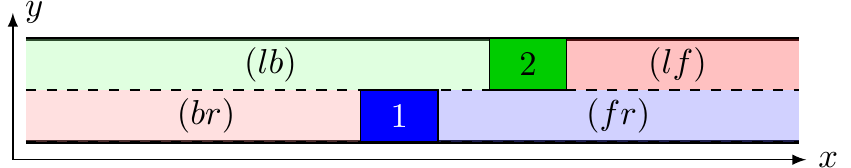}}
	\caption{Partitioning of the 2D space at different times in our example scenario, and adjacency relations $\adj$. In this example, $\adj_t(lb, br) = \adj_t(lf, fr) = 1$ and $\adj_t$ is symmetrical at all times. \label{fig:partition}}
\end{figure} 

This partitioning method can be generalized to the three-dimensional space-time by using unions of regions sharing the same signature, as shown in \Cref{fig:partition-3d}. The notion of adjacency described above can be extended, and we let $\Adj(A,B)$ be the set of times $t$ such that $\adj_t(A,B) = 1$. We call the set $\Adj(A,B)$ the \textit{validity set} of the transition from $A$ to $B$, corresponding to time periods for which a collision-free trajectory from $A$ to $B$ exists. The validity sets in this example are given in \Cref{tab:adjacency-set}, with initial time $t_0$.

\begin{figure}
	\centering \subfloat{\includegtikz[width=\columnwidth]{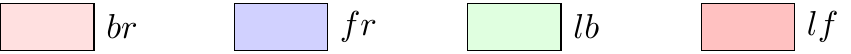}}
	
	\vspace*{-0.2cm}	
	\centering \subfloat{\includegtikz[width=.7\columnwidth]{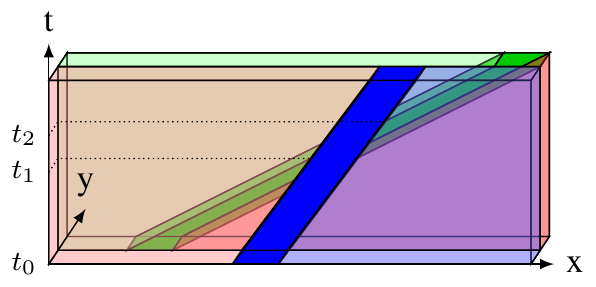}}
	\subfloat{\includegtikz[width=.25\columnwidth]{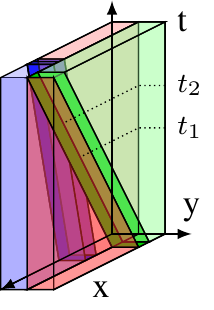}}
	\caption{Partitioning of the free space-time of \Cref{fig:free-tspace} into four cells. The legend gives the signature of each cell, with blue obstacle first. \label{fig:partition-3d}}
\end{figure}

\begin{table}
	\centering
	\caption{Validity sets $\Adj(A,B)$ \label{tab:adjacency-set}}
	\begin{tabular}{c c c c c}
%		\toprule
		%& \multicolumn{4}{c}{B} \\ \cmidrule(lr){2-5} 
		 & $br$ & $fr$ & $lb$ & $lf$ \\
		\cmidrule(lr){2-5} 
		$br$ & $[t_0, +\infty)$ & $\emptyset$ & $[t_0, +\infty)$ & $[t_0, t_1)$ \\
		$fr$ & $\emptyset$ & $[t_0, +\infty)$ & $(t_2, +\infty)$ & $[t_0, +\infty)$ \\
		$lb$ & $[t_0, +\infty)$ & $(t_2, +\infty)$ & $[t_0, +\infty)$ & $\emptyset$ \\
		$lf$ & $[t_0, t_1)$ & $[t_0, +\infty)$ & $\emptyset$ & $[t_0, +\infty)$ \\
		\cmidrule(lr){2-5} 
%		\bottomrule
	\end{tabular}
\end{table}

\begin{figure}
\centering \includegtikz[width=.9\columnwidth]{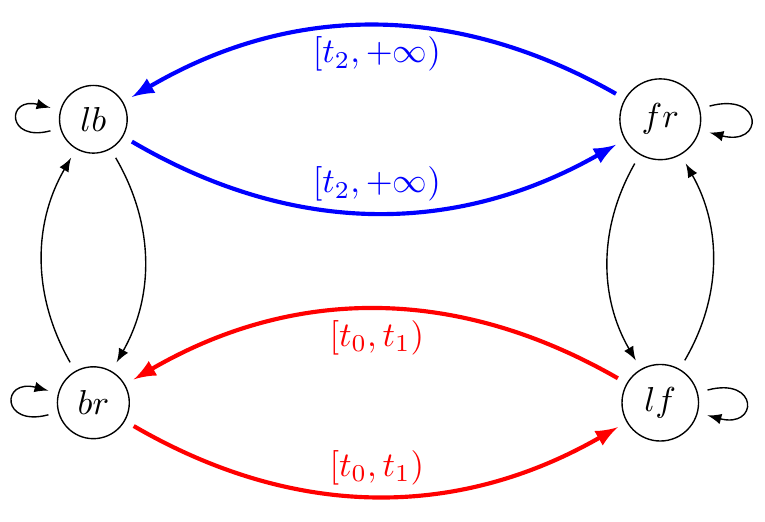}
\caption{Transition graph corresponding to \Cref{fig:partition-3d}, with validity set of each edge. Thinner edges shown in black have a validity set $[t_0, +\infty)$ (omitted for readability). \label{fig:continuous-tg}}
\end{figure}

Using \Cref{tab:adjacency-set}, we can build a directed graph (that we call \textit{transition graph}) representing all the possible transitions between cells of the partition as shown in \Cref{fig:continuous-tg}: each vertex of this graph corresponds to a partition cell, and we add the edge $A \rightarrow B$ if $\Adj(A,B) \neq \emptyset$. Additionally, we associate to each edge of the graph the corresponding validity set. A path in this graph is given as a succession of edges and associated transition times within the validity set of each edge, for instance $\left((br \rightarrow lb, t_1), (lb \rightarrow fr, t_2) \right)$ corresponding to the maneuver of waiting for the green vehicle ($2$) to pass before overtaking the blue one ($1$). Between these explicit transition times, the \ev is supposed to remain inside the last reached cell.

Using this graph-based representation also allows to compute a risk metric associated to a maneuver, called \textit{time margin}. This measure is defined as the time which remains to the \ev to perform a particular maneuver, before the most constrained transition becomes impossible. To illustrate this notion (which is formally defined in \Cref{sec:math}), we present example time margins for a selection of paths in \Cref{tab:time-margin}. 

\begin{table}
	\centering
	\caption{Time margins of example paths \label{tab:time-margin}}
	\begin{tabular}{l c c}
		\toprule
		Path & Most constr. trans. & Margin \\
		\midrule 
		$\left( (br \rightarrow br, t_0) \right)$ & $br \rightarrow br$ & $+\infty$ \\ 
		$\left((br \rightarrow lb, t_1), (lb \rightarrow fr, t_3) \right)$ & $lb \rightarrow fr$ & $+\infty$ \\ 
		$\left((br \rightarrow lf, t_0), (lf \rightarrow fr, t_1) \right)$ & $br \rightarrow lf$ & $t_1 - t_0$\\ 
		\bottomrule
	\end{tabular}
\end{table}

\begin{figure}
	\centering \subfloat{\includegtikz[width=.7\columnwidth]{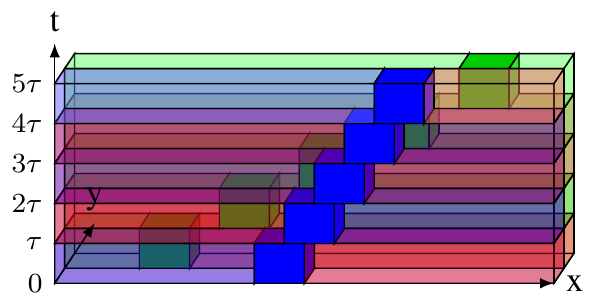}}
	\subfloat{\includegtikz[width=.25\columnwidth]{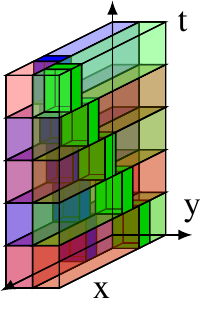}}
	\caption{Discrete partitioning of the free space-time of \Cref{fig:free-tspace}, with $t_0 = 0$. \label{fig:partition-3d-discrete}}
\end{figure}

\begin{figure}
	\centering \includegtikz[width=.85\columnwidth]{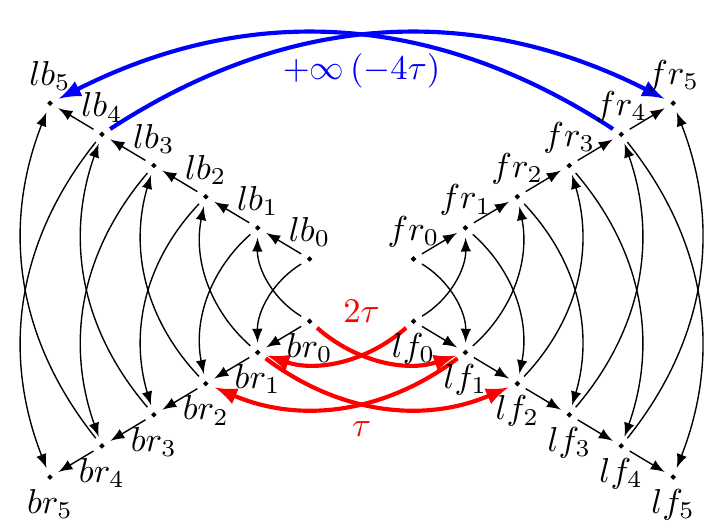}
	\caption{Discrete-time transition graph and time margins corresponding to the partition of \Cref{fig:partition-3d-discrete}. Vertex $A_k$ corresponds to the \ev being in set $A$ at time $t_0 + k\tau$. \label{fig:transition-graph}}
\end{figure}

Although this continuous approach is mathematically interesting, it is not necessarily suited for practical computer implementation, which is generally based on time sampling. For this reason, we also propose a discrete partitioning as shown in \Cref{fig:partition-3d-discrete}: for a discretization time step $\tau > 0$, we approximate the free space as a union of disjointed cylinders of the form $A \times [t_0 + k\tau, t_0 + (k+1)\tau)$ where $A$ is a subset in the partition at time $t_0 + k\tau$. Using this time-discretized partition, we can adapt the notion of adjacency to design a time-discretized transition graph, as shown in \Cref{fig:transition-graph}. In this graph, a path can be simply given as a list of successive vertices, thus allowing to use classic exploration algorithms. The time margin of any edge in the graph can also be easily computed (as shown in \Cref{sec:math}). Note that it is also possible to perform an event-based (instead of constant-time-based) partition, which has the advantage of exactly matching the time instants when the adjacency of two cells changes. However, since the partitioning will ultimately be used using the \ev's feasible dynamics -- which are not easily integrated into an event-based framework -- the constant-time discretization is preferred.

We believe that the proposed graph-based representation has two main advantages. First, the combinatorial part of the trajectory planning problem, consisting in choosing a feasible maneuver around the obstacles, is reduced to selecting a path in a transition graph. We will show in \Cref{sec:optimization} that, once such a path is given, computing a corresponding optimal trajectory becomes relatively simple for a large class of cost functions. Second, the graph approach makes it easy to take into account safety margins by avoiding exploration of time-constrained edges, which can be useful to handle uncertainty in trajectory estimation. Additional metrics can also be computed (see, \eg,~\cite{Constantin2014}) for spatial constraints, in order to account for control or positioning error.

% that any collision-free trajectory corresponds to a path in this graph, denoting successive cells occupied by the \ev. Reciprocally, any path in the graph corresponds to a ``class'' of collision-free trajectories. Therefore, the choice of a path in the graph (with the associated transition times) contains the entire

\section{Mathematical results}\label{sec:math}
\subsection{Modeling}
We now proceed to theorize and generalize the intuitions exposed in the previous section. We consider an autonomous ego ground vehicle, driving on a road in presence of obstacles, which can either be fixed or mobile. The \ev is assumed to remain parallel to the local direction of the road, as it usually is the case in normal (non-crash) situations, so that its configuration is given by the position of its center of mass, denoted by $(x,y)$ in ground coordinates. 

\begin{figure}
	\centering \includegtikz[width=0.7\columnwidth]{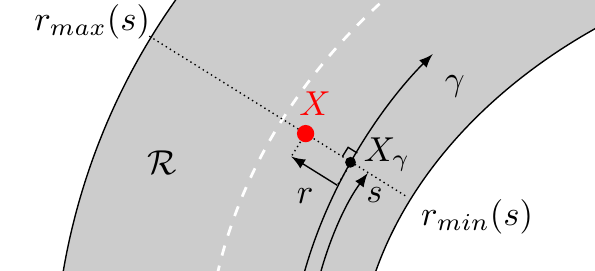}
	\caption{Frenet coordinates of a point on the road. \label{fig:frenet}}
\end{figure} 

We assume that the \ev has knowledge of the road geometry, for instance through cartography, as a $\mathcal C^2$ reference path $\gamma$ and bounds on the lateral deviation from $\gamma$, as shown in \Cref{fig:frenet}; we let $\RR \subset \mathbb R^2$ be such that the \ev is on the road if, and only if, $(x,y) \in \RR$. Finally, we assume that the road curvature and width are such that, for all $X = (x,y) \in \RR$, there exists a unique point $X_\gamma \in \gamma$ which is closest to $X$. 

According to \Cref{fig:frenet}, we define the Frenet coordinates of $X$ as $(s(X), r(X))$, where $s(X)$ is the curvilinear position of the corresponding point $X_\gamma$ along $\gamma$, and $r(X) = (X - X_\gamma) \cdot \bN$ with $(\bT,\bN)$ the Frenet frame of $\gamma$ at point $X_\gamma$. With these notations, we let $r_{min}$ and $r_{max}$ be such that $X  \in \RR$ if, and only if, $r_{min}(s(X)) \leq s(X) \leq r_{max}(s(X))$, and we let $\QQ = \left\{ (s,r) \in \mathbb R^2 \ : r_{min}(s) \leq r \leq r_{max}(s) \right\}$ denote the extent of the road in Frenet coordinates. In what follows, we only consider the Frenet coordinates of the \ev, and we drop the dependence of $s$ and $r$ in $X$. We assume that the \ev only moves forward along the road, in the direction of increasing $s$.

We denote by $\mathcal O$ the set of obstacles (considered as open sets) existing on the road around the \ev, and by $N = |\mathcal O|$ the number of obstacles. At a given time $t_0$, we consider a time horizon $T$ and we assume that an estimation of the trajectory of each obstacle $o \in \mathcal O$ is available over $[t_0, t_0+T]$. This estimation could come, for instance, be performed through machine learning techniques~\cite{altche2017}. We let $\chi = \QQ \times [t_0, t_0+T]$ the set of space-time points for which the vehicle is on the road, and we define the free portions of the space (respectively, of the space-time) as follows:

\begin{defn}[Free space]
	The (collision-)free space at time $t$ is the set $\QQ_f^t = \QQ \setminus \bigcup_{o \in \mathcal O} o$.
	The (collision-)free space-time over $[t_0,t_0+T]$ is the $\chi_f = \left\{ \QQ_f^t \times {t} \ : \ t \in [t_0, t_0+T]\right\}$.
\end{defn}
	
We call obstacle space-time $\chi_o$ the complement of $\chi_f$ in $\chi$. Note that the free space-time is similar to the notion of configuration space(-time), which is widely used in robotics~\cite{erdmann1986multiple}, and can be computed efficiently~\cite{kockara2007collision} provided that each obstacle's trajectory is known in advance.
%\Cref{fig:example-sit} presents an example situation with two vehicles (apart from the \ev) on the road, and \Cref{fig:free-tspace} shows the corresponding collision-free space-time (in white). 
In this article, we suppose perfect knowledge of these future trajectories over $[t_0, t_0+T]$; however, probabilistic trajectory estimates can also be taken into account, for instance by defining $\chi_f^p$ as the set of points of $\chi$ which are free with probability $p$.

%\subsection{Partitioning algorithm}\label{sec:partitioning}
To simplify the rest of the presentation, we consider that for all $t_1 \in [t_0, t_0+T]$, the intersection of the obstacle space-time $\chi_o$ is a union of (potentially rotated) rectangles; due to the roughly rectangular shape of classical vehicles, this assumption does not excessively sacrifice precision. Moreover, we assume that the road boundary functions $r_{min}$ and $r_{max}$ are piecewise-linear and continuous. In the following subections, we present our approach to partition $\chi_f$ into semantically meaningful subsets using a two-step algorithm: first, we partition planes corresponding to a fixed time $t_1 \in [t_0, t_0+T]$ in \Cref{sec:2dpartition}; second, we deduce a partition of $\chi_f$ in \Cref{sec:3dpartition,sec:3dpartitiondiscrete}.

%Many algorithms can be used to this end. However, we believe that partitioning $\chi_f$ in a semantically meaningful way, \ie that can be easily understood by humans, is preferable to purely arbitrary partitions that could be achieved, for instance, by tetrahedral meshing. Moreover, it is demonstrated in~\cite{Park2015} that partitioning the free (2D) space with a minimal set of edges provides interesting topological properties. In what follows, we systematize the approach presented in \Cref{sec:example} in a two-step partitioning algorithm: first, we partition planes corresponding to a fixed time $t_1 \in [t_0, t_0+T]$ in \Cref{sec:2dpartition}; second, we deduce a partition of $\chi_f$ in \Cref{sec:3dpartition,sec:3dpartitiondiscrete}.

\subsection{Semantic free-space partitioning}\label{sec:2dpartition}

\begin{figure}
	\centering \includegtikz[width=0.82\columnwidth]{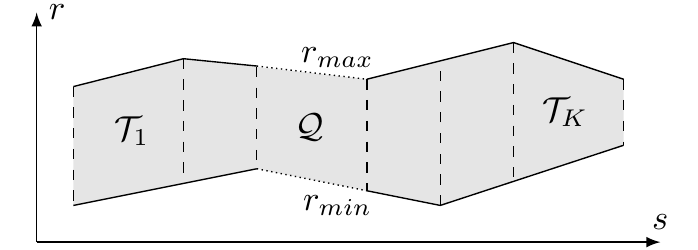}
	\caption{Decomposition of the road (in grey) in trapezes. \label{fig:trap-decomp}}
\end{figure} 

First, note that we can use trapeze decomposition to partition the road in convex regions using $r_{min}$ and $r_{max}$ as shown in \Cref{fig:trap-decomp}; since the road profile does not depend on time, this decomposition allows to fully partition $\chi$ by using cylinders with trapezoidal base. Each trapeze $\mathcal T_k$ (with $k \in \{1\twodots K\}$) can be defined by a set of linear constraints\footnote{\label{note:strict-ineq}To ensure the sets are disjoint, some of the inequalities should be strict. In practice, we use non-strict inequalities with a small tolerance $\varepsilon$.}, in the form $A_k X \leq b_k$ with $A_k$ a 4-by-2 matrix, $X = [s, r]^T$ and $b_k$ a vector of $\mathbb R^4$. This approach allows modeling varying roadway width and curvature, but may require an important number of trapezes to correctly handle sharp bends.

\begin{figure}
	\centering \includegtikz[width=0.8\columnwidth]{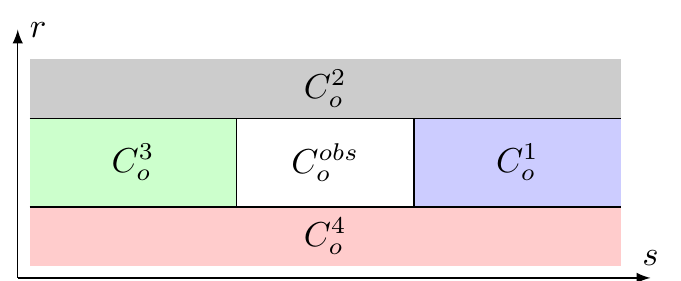}
	\caption{Partitioning of the 2D space around a single obstacle ($C_o^{obs}$) into four collision-free regions $C_o^i$. \label{fig:partition1v}}
\end{figure} 

For a single rectangular obstacle $o$ at time $t_1$, we define four regions $C_o^i \subset \mathbb R^2$ ($i \in \{1\twodots 4\}$) as illustrated in \Cref{fig:partition1v}; as in \Cref{sec:example}, these regions can be identified as positions where the \ev is located in front, to the left, behind or to the right of the obstacle. Similarly, we let $C_o^{obs}$ be the obstacle region corresponding to $o$. Since all obstacles are assumed rectangular, each of the $C_o^i$ regions is defined by a set of linear constraints\textsuperscript{\ref{note:strict-ineq}} in the form $A_o^i X \leq b_o^i$, with $A_o^i$ a two-column matrix, $X = [s, r]^T$ and $b_o^i$ a vector having the same number of lines as $A_o^i$. The partition of the free space $\QQ_f^{t_1}$ can be built recursively according to \Cref{alg:partitioning}; \Cref{thm:partition} ensures the validity of this algorithm; we let $\PP^{t_1}$ be the partition of $\QQ_f^{t_1}$ obtained by \Cref{alg:partitioning}.

%\vspace*{-0.3cm}
\begin{theorem}[Partition]\label{thm:partition}
	$\PP^{t_1}$ is a partition of $\QQ_f^{t_1}$.
\end{theorem}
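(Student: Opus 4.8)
My plan is to prove this by induction on the number of obstacles incorporated by \Cref{alg:partitioning}. Write $\mathcal O = \{o_1,\dots,o_N\}$ and, for $0 \le m \le N$, set $S_m = \QQ \setminus \bigcup_{j \le m} o_j$, so that $S_0 = \QQ$ and $S_N = \QQ_f^{t_1}$. I will show that after processing $o_1,\dots,o_m$ the algorithm holds a finite collection $\PP_m$ of nonempty cells that partitions $S_m$; taking $m = N$ then gives the statement. The base case is $m = 0$: here $\PP_0$ is the trapeze decomposition $\{\mathcal T_1,\dots,\mathcal T_K\}$, which is built directly from the graphs of the piecewise-linear functions $r_{min},r_{max}$ so that the $\mathcal T_k$ are pairwise disjoint with union exactly $\QQ = S_0$. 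Throughout, I adopt the boundary convention of the footnote (strict inequalities, equivalently a fixed $\varepsilon$ tie-break) so that ``partition'' can be read literally.

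The inductive step rests on two elementary ingredients. First, for a single rectangular obstacle $o$, the five sets $C_o^1,\dots,C_o^4,C_o^{obs}$ partition $\mathbb R^2$: this is immediate from their defining linear constraints, since ``in front'' and ``behind'' are the two half-planes ahead of and behind the rectangle's $s$-extent, while ``left'', ``right'' and ``inside'' subdivide the remaining slab, no point satisfying two of these constraint sets and every point satisfying one (up to the boundary convention). Second, I will use the purely set-theoretic fact that if $\mathcal A$ partitions a set $S$ and $\mathcal B$ partitions a superset $S' \supseteq S$, then $\{A \cap B : A \in \mathcal A,\ B \in \mathcal B\}\setminus\{\emptyset\}$ partitions $S$, and deleting from it every cell contained in one fixed $B_0 \in \mathcal B$ leaves a partition of $S \setminus B_0$ (because each $A$ is the disjoint union of the $A \cap B$, and disjointness of $\mathcal B$ gives $A \setminus B_0 = \bigsqcup_{B \ne B_0}(A \cap B)$).

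Now assume $\PP_{m-1}$ partitions $S_{m-1}$. When \Cref{alg:partitioning} processes $o_m$, it replaces each cell $C \in \PP_{m-1}$ by the nonempty members of $\{C \cap C_{o_m}^1,\dots,C \cap C_{o_m}^4\}$, i.e. it refines $\PP_{m-1}$ by the partition $\{C_{o_m}^1,\dots,C_{o_m}^4,C_{o_m}^{obs}\}$ of $\mathbb R^2$ and discards the pieces lying inside $C_{o_m}^{obs} = o_m$. Applying the set-theoretic fact with $S = S_{m-1} \subseteq \mathbb R^2$ and $B_0 = o_m$, the resulting collection $\PP_m$ partitions $S_{m-1}\setminus o_m = S_m$, closing the induction; hence $\PP_N$ partitions $\QQ_f^{t_1}$. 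If the algorithm additionally groups cells sharing a signature, observe that two pieces with different signatures disagree in some component, say lying in $C_{o}^{i}$ versus $C_{o}^{i'}$ with $i \ne i'$, hence are disjoint; grouping therefore preserves pairwise disjointness while leaving the union unchanged, so the reported $\PP^{t_1}$ is still a partition of $\QQ_f^{t_1}$.

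The main obstacle is the point flagged in the footnote: the regions $C_o^i$, $C_o^{obs}$ and the trapezes $\mathcal T_k$ genuinely overlap along lower-dimensional boundaries, so each of the two ``partition'' claims above is exact only once a consistent rule assigns every such boundary point to exactly one adjacent cell. A fully rigorous argument must verify that the chosen rule (the strict-versus-non-strict inequality pattern, or the $\varepsilon$ tolerance) is applied coherently over all obstacles and all trapezes, so that no boundary point is claimed twice or dropped; everything else is the routine induction sketched above, and once this bookkeeping is pinned down the argument goes through verbatim.
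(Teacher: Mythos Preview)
Your proof is correct and follows essentially the same route as the paper: induction on the number of obstacles processed, with base case the trapeze decomposition of $\QQ$ and inductive step driven by the fact that $C_{o_n}^1,\dots,C_{o_n}^4,C_{o_n}^{obs}$ partition $\mathbb R^2$ (equivalently, the first four partition $\mathbb R^2\setminus C_{o_n}^{obs}$). The only cosmetic difference is that you package the inductive step as an abstract refinement-then-delete lemma, whereas the paper argues the two inclusions directly; your additional remarks on the boundary convention and on signature grouping go slightly beyond what the paper's proof records but do not change the argument.
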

\begin{proof}
	We will prove that, for all $0 \leq n \leq N$, $\PP_n$ is a partition of $\QQ_n = \QQ \setminus \bigcup_{i = 1}^n C_{o_i}^{obs}$. First, this property is verified for $\PP_0$ which is a partition of $\QQ$. Second, the loop preserves the following invariants for all $n \geq 1$ and $e \in \PP_n$: 
	\begin{itemize}
		\item $e \neq \emptyset$ and $\exists e' \in \PP_{n-1}$ such that $e \subset e'$;
		\item for all $1 \leq i \leq n$, $\exists j \in \{1 \twodots 4\}$ such that $e \subset C_{o_i}^j$.
	\end{itemize}
	Thus, $\PP_n = \left\{ e \cap C_{o_n}^j \middle| e \in \PP_{n-1}, j \in \{1 \twodots 4\}, e \cap C_{o_n}^j \neq \emptyset\right\}$. Since the sets $\left( \mathcal C_{o_n}^j \right)_{j = 1\twodots 4}$ define a partition of $\mathbb R^2 \setminus C_{o_n}^{obs}$ and since all $e \in \PP_0$ is a subset of $\QQ$, we deduce by induction that all elements of $\PP_n$ are nonempty subsets of $\QQ_n$. 
	
	Reciprocally, for all any $q \in \QQ_n = \QQ_{n-1} \setminus C_{o_n}^{obs}$ there exists $j \in \{1 \twodots 4\}$ such that $q \in \QQ_{n-1} \cap C_{o_n}^j$. Since $\PP_0$ is a partition of $\QQ$, inductive reasoning yields $ \QQ_n \subset \bigcup_{e \in \PP_n} e$.
\end{proof}

\begin{algorithm}
\caption{Partitioning of $\QQ_f^{t_1}$\label{alg:partitioning}}
\begin{algorithmic}
	\State $\PP_0 \gets \{\mathcal T_k\}_{k=1\twodots K}$ \Comment{Initialize $\PP_0$ as a partition of $\QQ$} 
	\State $N \gets |\mathcal O|$
	\For {$n = 1\twodots N$} \Comment{Loop over all obstacles $o_n$}
	\State $\PP_n \gets \{\}$
	\ForAll{$C \in \PP_{n-1}$} \Comment{Loop over cells $ C$ in $\PP_{n-1}$}
	\For{$j = 1\twodots4$}
	\If{$C_{o_n}^j \cap C \neq \emptyset$}  \Comment{Partition $ C \setminus C_{o_n}^{obs}$}
	\State $\PP_n \gets \PP_n \cup \{ C_{o_n}^j \cap C \}$%\Comment{$\forall e \in \PP_n \ C_{o_n}^j \supset e \neq \emptyset$}
	\EndIf
	\EndFor
	\EndFor
	\EndFor
	\State $\PP^{t_1} \gets \PP_N$
\end{algorithmic}
\end{algorithm}

From the previous proof, we deduce that our partitioning of $\QQ_f^{t_1}$ bijectively corresponds to relative positions from all $N$ obstacles in the free space at time $t_1$. Thus, each element in the partition can be uniquely defined by a signature, as stated in \Cref{def:cor-signature} and \Cref{def:signature}:

\begin{corollary}[Semantization]\label{def:cor-signature}
	For all $e \in \PP^{t_1}$, there exists a unique tuple $\sigma_{t_1}(e) = (k, j_1, \threedots, j_N) \in \{1 \twodots K\} \times \{1 \twodots 4\}^N$ such that $e = \mathcal T_k \cap \bigcap_{n = 1\twodots N} C_{o_n}^{j_n}$. Using $\Sigma = \{1 \twodots K\} \times \{1 \twodots 4\}^N$, $\sigma_{t_1}$ is a bijection from $\Sigma$ to $\PP^{t_1} \cup \emptyset$.
\end{corollary}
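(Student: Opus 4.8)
The plan is to leverage the structure established in the proof of \Cref{thm:partition}, where it was shown that $\PP^{t_1} = \PP_N$ and that every element $e \in \PP_N$ satisfies $e \subset \mathcal T_k$ for exactly one $k$ (since $\PP_0 = \{\mathcal T_k\}$ is a partition of $\QQ$ and cells only shrink) and $e \subset C_{o_n}^{j_n}$ for exactly one $j_n \in \{1\twodots 4\}$ for each obstacle $n$ (since the $(C_{o_n}^j)_{j=1\twodots 4}$ partition $\mathbb R^2 \setminus C_{o_n}^{obs}$ and $e$ is contained in this complement, being nonempty and collision-free). First I would argue \emph{existence}: given $e \in \PP^{t_1}$, unrolling the recursion of \Cref{alg:partitioning} shows $e$ arises as a chain of intersections $(\cdots((\mathcal T_k \cap C_{o_1}^{j_1}) \cap C_{o_2}^{j_2}) \cdots) \cap C_{o_N}^{j_N}$ for some choice of indices, and this chain of intersections is exactly $\mathcal T_k \cap \bigcap_{n=1\twodots N} C_{o_n}^{j_n}$ by associativity of intersection. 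This produces a tuple $\sigma_{t_1}(e) = (k, j_1, \ldots, j_N) \in \Sigma$.

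Next I would argue \emph{uniqueness} of the tuple for a given $e$: suppose $e = \mathcal T_k \cap \bigcap_n C_{o_n}^{j_n} = \mathcal T_{k'} \cap \bigcap_n C_{o_n}^{j'_n}$ with both tuples in $\Sigma$. Since $e$ is nonempty (it is an element of a partition), pick $q \in e$. Then $q \in \mathcal T_k \cap \mathcal T_{k'}$, and because the trapezes are pairwise disjoint, $k = k'$; similarly $q \in C_{o_n}^{j_n} \cap C_{o_n}^{j'_n}$ for each $n$, and because for fixed $n$ the four regions $C_{o_n}^{j}$ are pairwise disjoint, $j_n = j'_n$. Hence the tuple is unique, so $\sigma_{t_1}$ is a well-defined map $\PP^{t_1} \to \Sigma$, and it is \emph{injective} because if $\sigma_{t_1}(e) = \sigma_{t_1}(e')$ then $e$ and $e'$ are described by the same intersection formula and thus equal as sets.

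The remaining point is to extend $\sigma_{t_1}$ to a \emph{bijection} $\Sigma \to \PP^{t_1} \cup \{\emptyset\}$: the natural candidate is the inverse map sending $(k, j_1, \ldots, j_N) \mapsto \mathcal T_k \cap \bigcap_n C_{o_n}^{j_n}$. By the existence argument this hits every element of $\PP^{t_1}$; by injectivity distinct tuples mapping to nonempty sets give distinct cells; and every tuple whose associated intersection is empty maps to $\emptyset$. I would note that the convention of adjoining $\emptyset$ to the codomain is precisely what makes this a bijection rather than merely a surjection onto a quotient — the tuples not realized by \Cref{alg:partitioning} are exactly those for which the intersection is empty (by the partition property of \Cref{thm:partition}, no collision-free point is missed and no cell is double-counted). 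The main obstacle, such as it is, is bookkeeping: making precise that the iterative intersections in the algorithm coincide with the single $N{+}1$-fold intersection in the statement, and being careful that "bijection from $\Sigma$ to $\PP^{t_1}\cup\emptyset$" is read with the understanding that all empty intersections are identified with the single element $\emptyset$; there is no deep mathematical difficulty beyond the disjointness facts already available from the proof of \Cref{thm:partition}.
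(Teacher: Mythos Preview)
Your proposal is correct and follows exactly the approach the paper intends: the paper does not give a standalone proof of this corollary, merely stating ``From the previous proof, we deduce\ldots'', and your argument is precisely the unpacking of the loop invariants established in the proof of \Cref{thm:partition} (each cell is contained in a unique trapeze and a unique $C_{o_n}^{j}$ per obstacle, hence equals the corresponding intersection). Your closing remark about the ``bijection'' wording is also apt: strictly speaking the map $\Sigma \to \PP^{t_1} \cup \{\emptyset\}$ is only surjective since several tuples may yield the empty intersection, and the paper is relying on the reader to accept the convention you describe.
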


\begin{defn}[Signature]\label{def:signature} We call $\sigma_{t_1}(e) \in \Sigma$ from \Cref{def:cor-signature} the \textit{signature} of subset $e$.
\end{defn}

Moreover, there is a finite number of elements in the partition which is bounded by $K 4^N$ for $K$ trapezes and $N$ obstacles. Additionally, all elements $e \in \PP^{t_1}$ also are convex polygons (or \textit{cells}), which can be fully described using a single (matrix, vector) pair that can easily be stored in computer memory. \Cref{fig:example3v,fig:partition3v} illustrate our partitioning in a more complex scenario\footnote{Obstacle regions $1$, $2$, $3$ in \Cref{fig:partition3v} are computed for a point-mass \ev, in order to match the vehicle shapes shown in \Cref{fig:example3v}.} with $3$ vehicles; note that, for clarity purposes, we respectively used $f, l, b, r$ instead of $1, 2, 3, 4$ as defined in \Cref{def:signature}. Also remark that, although \Cref{fig:example3v} is shown in world coordinates $(x,y)$, \Cref{fig:partition3v} uses Frenet coordinates $(s,r)$. In order to encode the relation between elements of the partition, we introduce the notion of adjacency as follows:
\begin{defn}[Adjacency]
	\label{def:adjacency}
	For $e_1, e_2 \in \mathcal P^{t_1}$, we say that $e_1$ and $e_2$ are adjacent if, and only if the intersection of their closures is not empty, \ie $\overline{e_1} \cap \overline{e_2} \neq \emptyset$. For $\sigma, \sigma' \in \Sigma$, we let $\adj_{t_1}(\sigma, \sigma') = 1$ if $\sigma_{t_1}^{-1}(\sigma)$ and $\sigma_{t_1}^{-1}(\sigma')$ are adjacent, and $0$ otherwise.
\end{defn}

Note that this definition considers cells whose closures intersect at single point as adjacent; this situation could happen, \eg, in the case shown in \Cref{fig:partition3b} between (br) and (lf). From a theoretical standpoint, the hypothesis that obstacles are open sets allows to overcome this issue since, in this case, the vehicle can actually perform the maneuver; in practice, the use of safety margins around the obstacles, and constraints on time margin (see \Cref{sec:3dpartitiondiscrete}) prevent this question from becoming an issue. The main reason for choosing this slightly looser criterion compared, \eg, to that of~\cite{Park2015} -- requiring the intersection to be a non-singleton segment -- is that it can be verified in polynomial time using the matrix inequality representation and linear programming.  %Moreover, since each subset in $\PP^{t_1}$ is convex, the intersection $\overline{e_1} \cap \overline{e_2}$ is convex when $e_1$ and $e_2$ are adjacent; if $e_1 \neq e_2$, this intersection has an empty interior and is therefore a segment (or a point). We deduce the following definition:

%\begin{defn}[Adjacency length]
%	Let $\sigma, \sigma' \in \Sigma$. The adjacency length of $\sigma$ and $\sigma'$ at $t_1$, denoted by $\adjlen_{t_1}(\sigma, \sigma') \geq 0$, is the length of the segment $\overline{\sigma_{t_1}^{-1}(\sigma)} \cap \overline{\sigma_{t_1}^{-1}(\sigma')}$. By convention, $\adjlen_{t_1}(\sigma, \sigma) = +\infty$.
%\end{defn}

\begin{figure}[h]
	\centering \includegtikz[width=0.8\columnwidth]{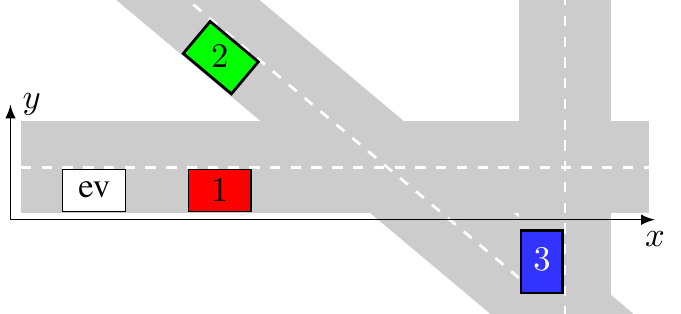}
	\caption{A more complicated example with 3 obstacles. \label{fig:example3v}}
\end{figure} 

\begin{figure}[h]
	\centering \includegtikz[width=0.9\columnwidth]{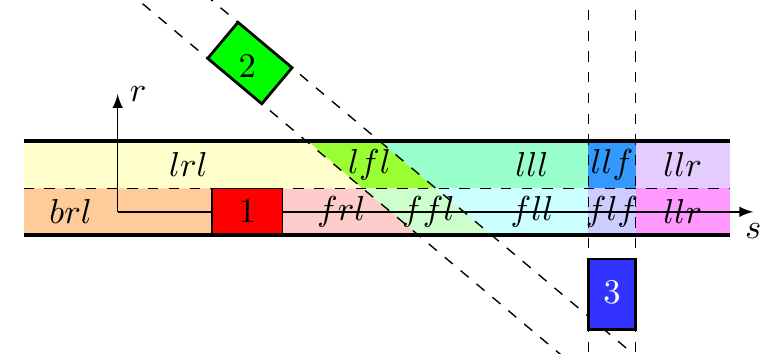}
	\caption{Partitioning in the example of \Cref{fig:example3v}, with subsets signature; for instance, $brl$ means that the \ev is behind the first vehicle, to the right of the second and left of the third. The thick black lines correspond to the unique trapeze encoding road boundaries; its index is omitted for clarity. \label{fig:partition3v}}
\end{figure} 

\subsection{Continuous free space-time partitioning}\label{sec:3dpartition}
%The results in the previous subsection remain valid in the space-time by defining $\PP(t_1) = \left\{ e \times \{t_1\} \ :\ e \in \PP^{t_1} \right\}$, so that $\PP(t_1)$ is a partition of $\chi_f \cap \{t=t_1\}$. 
We now proceed to generalize these results to the free space-time $\chi_f$; to define a partition of this space we use the union (over time) of cells sharing the same signature:
\begin{defn}[Space-time cell]For $\sigma \in \Sigma$, we let $E_\sigma = \bigcup_{t_1 \in [t_0, t_0+T]} \sigma_{t_1}^{-1}(\sigma) \times \{t_1\}$ be the space-time cell corresponding to $\sigma$, \ie the set of all points in the free space-time sharing this signature.
\end{defn}

Since the set of non-empty elements of $\sigma_t^{-1}(\Sigma)$ defines a partition of $\QQ_f^t$, the set $\PP = \left\{ E_\sigma\, \middle|\, \sigma \in \Sigma, E_\sigma \neq \emptyset \right\}$ defines a partition of $\chi_f$ (see \Cref{fig:partition-3d}). The notion of adjacency can then be generalized as follows:
\begin{defn}[Validity set]
	For $\sigma$, $\sigma' \in \Sigma$, we define the validity set $\Adj(\sigma, \sigma') = \left\{ t \in [t_0, t_0+T]\, \middle|\, \adj_t(\sigma, \sigma') = 1 \right\}$.
\end{defn} 
\vspace*{-0.5cm}

Using this notion, we can now define a transition graph (see \Cref{fig:continuous-tg}) as follows:
\begin{defn}[Continuous transition graph]
	The continuous transition graph is the directed graph $\mathcal G^c = (\mathcal V^c, \mathcal E^c, \Adj)$ with vertex set $\mathcal V^c = \left\{\sigma \in \Sigma\, \middle|\, E_\sigma \neq \emptyset \right\}$, edges set $\mathcal E^c = \left\{ (\sigma_1, \sigma_2) \in \Sigma^2\, \middle|\, \Adj(E_{\sigma_1}, E_{\sigma_2}) \neq \emptyset \right\}$ and associated validity set $\Adj$.
\end{defn}

The motivation for introducing this graph is that any collision-free maneuver corresponds to a unique \textit{path} in $\mathcal G^c$. To account for the temporal aspect of this graph, such a path is defined as follows:
\begin{defn}[Path in $\mathcal G^c$]
	A path in $\mathcal G^c$ is given by a list of vertices $(\sigma_1, \threedots, \sigma_{m+1}) \in \mathcal V^c$ so that for all $i \leq m$, $(\sigma_i, \sigma_{i+1}) \in \mathcal E^c$, and a list of strictly increasing transition times $(t_1, \threedots, t_m)$ such that for all $i \in \{1 \twodots m\}$, $t_i \in \Adj(\sigma_i, \sigma_{i+1})$ and $[t_i, t_{i+1}) \subset \Adj(\sigma_i, \sigma_i)$.
\end{defn}

In other words, a path in $\mathcal G^c$ is a sequence of cells and time instants corresponding to the transition time between two successive cells; between two consecutive transitions, the \ev can remain in the cell it occupies last. For a given path, we can now define the corresponding time margin as the time left for the vehicle to perform the most constrained transition before it becomes impossible:
\begin{defn}[Time margin along a path]\label{def:validity-continuous}
	Consider a path in $\mathcal G^c$ given as $\pi^c = \big( (\sigma_1, \twodots, \sigma_{m+1}), (t_1, \twodots, t_m) \big)$. The time margin along $\pi^c$ is $V(\pi^c) = \min_{i = 1\threedots m} \left( \sup \Big\{ t - t_i\ \big|\ [t_i, t) \subset \Adj(\sigma_i, \sigma_{i+1}) \Big\} \right)$.
\end{defn}

\subsection{Discrete-time partitioning}\label{sec:3dpartitiondiscrete}
Due to the potentially complex trajectories followed by the obstacles, there is no guarantee regarding the topology of the subsets $E_\sigma$, which can for instance have multiple connected components; similarly, $\Adj(\sigma_1, \sigma_2)$ is in general a union of disjointed intervals. To make practical applications easier, we also propose a temporal discretization of the free space-time with a time step duration $\tau$ (with $T = P\tau$). Note that the value of $\tau$ depends on a trade-off between acceptable computation time, length of the planning horizon and required precision in vehicle dynamics; \Cref{sec:simulation} provides some performance reports on the influence of this parameter. We approximate $E_\sigma$ as a union of box-shaped cells:

\begin{defn}[Discrete space-time cell]
	For $\sigma \in \Sigma$ and $p \in \{0 \threedots P\}$, we let $E_\sigma^p = \sigma_{\theta_p}^{-1}(\sigma) \times [\theta_p, \theta_{p+1})$ be the discrete space-time cell corresponding to $\sigma$ at step $p$, with $\theta_p = t_0 + p\tau$.
\end{defn}

In the rest of this article, we assume\footnote{This assumption requires taking slight safety margins, the size of which being proportional to the distance traveled by the obstacles during the duration of a discretization time step $\tau$.} that $E_\sigma^p \subset \chi_f$ for all $\sigma \in \Sigma$ and $p \in \{0 \twodots P\}$. Since $\sigma_{\theta_p}^{-1}(\sigma)$ is a convex (or empty) set, $E_\sigma^p$ is either empty or convex, and fully defined by a set of linear inequalities in the form $A[X, t]^T \leq b$ (the comments of \cref{note:strict-ineq} also apply here). Finally, we define a partition of the free space-time $\chi_f$ in convex box-shaped cells as $\PP^{\tau} = \left\{ E_\sigma^{p} \middle| p \in \{0\twodots P\}, \sigma \in \Sigma, E_\sigma^{p} \neq \emptyset \right\}$ (see \Cref{fig:partition-3d-discrete}), and we associate a discrete transition graph:

%\subsection{Transition graph and time margin}\label{sec:validity}

\begin{defn}[Discrete transition graph]
	\label{def:transition-graph}
	The discrete transition graph is the directed graph $\mathcal G^d = (\mathcal V^d, \mathcal E^d)$ with vertex set $\mathcal V^d = \PP^\tau$ and edges set $\mathcal E^d = \left\{ (E_{\sigma_1}^p, E_{\sigma_2}^{p+1}) \middle| E_{\sigma_1}^p, E_{\sigma_2}^{p+1} \in \mathcal V^d,\, \adj_{\theta_p}(\sigma_1, \sigma_2) = 1 \right\}$.
\end{defn}

Therefore, each vertex of $\mathcal G$ corresponds to a certain cell of the partition $\PP^\tau$ at a given time $\theta_p$, and the edge $v_1 \rightarrow v_2$ exists if $v_1$ and $v_2$ represent two adjacent cells (possibly twice the same) at two consecutive time steps. Paths in $\mathcal G^d$ comply with the usual definitions of graph theory and can be given as a set of vertices.

Finally, we define the time margin for paths in $\mathcal G^d$ as:
\begin{defn}[Time margin in the discrete graph]
	For a path $\pi^d = \big( E_{\sigma_0}^{0}, \threedots, E_{\sigma_{m+1}}^{m+1} \big)$ in $\mathcal G^d$, the time margin is $v(\pi^d) = \displaystyle{\min_{i = 0\threedots {m}}} \max_{p = i\dots m} \tilde v(i,p)$ with $\tilde v(i,p) = \big\{ \tau(p - i+1)\, \big|\, \forall q \in \{ i \twodots p \}, \adj_{\theta_q}(\sigma_i, \sigma_{i+1}) = 1 \big\}$.
\end{defn}
Note that $\tilde v(i,p)$ is the discrete-time equivalent of the set $\left\{ t - t_i\ \big|\ [t_i, t) \subset \Adj(\sigma_i, \sigma_{i+1}) \right\}$ from \Cref{def:validity-continuous}, and this discrete time margin is analogous to that of \Cref{def:validity-continuous}.

\section{Application to planning and control}\label{sec:optimization}
Before presenting the applications of our approach to planning and control, let us formalize the link between paths in the transition graph and trajectories for the \ev:
\begin{defn}[Interpretation of transition graph paths]
	Let $\pi_0^c = \big( (\sigma_1, \threedots, \sigma_{m+1}), (t_1, \threedots, t_m) \big)$ be a path in $\mathcal G^c$ and $x(t)$ be a collision-free trajectory for the \ev. We say that $\pi_0$ corresponds to $x$ if, for all $i \in \{1 \twodots m\}$, $x \big([t_{i-1}, t_i) \big) \subset E_{\sigma_i}$ and $x \big([t_m, T) \big) \subset E_{\sigma_{m+1}}$.
\end{defn}

From this definition, we deduce that for a given collision-free trajectory $x(t)$ there exists a unique corresponding path in $\mathcal G^c$ denoted by $\pi(x)$. Reciprocally, for a given path $\pi_0$ in $\mathcal G^c$, there exists a set of corresponding trajectories denoted by $\pi^{-1}(\pi_0)$. We obtain the following theorem:

\begin{theorem}[Motion-planning equivalence]
	Let $J(x)$ be a cost function for a given trajectory $x(t)$, $X$ the set of collision-free trajectories, and $\Pi$ the set of paths in $\mathcal G^c$. Then:
	\begin{equation}
		\min_{x \in X}  J(x) = \min_{\pi_0 \in \Pi}\left( \min_{x \in \pi^{-1}(\pi_0)} J(x) \right)
	\end{equation}
\end{theorem}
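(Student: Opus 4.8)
The plan is to recognize this as a purely set-theoretic decomposition of the feasible set, and reduce it to the elementary identity ``minimum over a cover equals minimum of the piecewise minima''. The key structural input, established just before the statement, is that the assignment $x \mapsto \pi(x)$ sends each collision-free trajectory to a \emph{unique} path in $\mathcal{G}^c$, and that $\pi^{-1}(\pi_0)$ collects exactly those collision-free trajectories mapped to $\pi_0$.

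First I would verify that the fibers cover $X$, i.e.\ $X = \bigcup_{\pi_0 \in \Pi} \pi^{-1}(\pi_0)$. The inclusion $\pi^{-1}(\pi_0) \subseteq X$ is immediate from the definition of $\pi^{-1}$, since any trajectory corresponding to a graph path is by construction collision-free. For the reverse inclusion, given $x \in X$ I would use the existence-and-uniqueness fact to produce $\pi(x) \in \Pi$ and observe $x \in \pi^{-1}(\pi(x))$. (Uniqueness additionally makes the fibers pairwise disjoint, so $\Pi$ indexes a genuine partition of $X$; only the covering property is needed below, however.)

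Next I would invoke the generic identity: for any $J : X \to \mathbb{R} \cup \{+\infty\}$ and any cover $X = \bigcup_{\pi_0} \pi^{-1}(\pi_0)$,
\begin{equation}
  \min_{x \in X} J(x) = \min_{\pi_0 \in \Pi}\Bigl( \min_{x \in \pi^{-1}(\pi_0)} J(x) \Bigr),
\end{equation}
with the convention $\min_{x \in \emptyset} J(x) = +\infty$ for those $\pi_0$ (if any) whose fiber is empty --- such terms never lower the outer minimum, hence are harmless provided $X \neq \emptyset$. The ``$\leq$'' direction holds because every value on the right is some $J(x)$ with $x \in X$; the ``$\geq$'' direction holds because any minimizing $x$ on the left lies in the fiber of $\pi(x)$, so its cost is bounded below by the inner minimum there and hence by the right-hand side.

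I do not expect a real obstacle: the entire content of the theorem is the bookkeeping observation that $x \mapsto \pi(x)$ stratifies the collision-free set, which the preceding definitions already supply. The only point deserving a word of care is the use of $\min$ rather than $\inf$ --- strictly, one should either read the equality with infima throughout, or assume (as the paper tacitly does, and as holds under mild continuity/compactness hypotheses on $J$ and on the trajectory space) that the minimizers exist; neither choice affects the argument above.
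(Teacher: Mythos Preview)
Your proposal is correct and follows essentially the same two-inequality approach as the paper: one direction from $\pi^{-1}(\pi_0)\subseteq X$, the other from every $x\in X$ lying in the fiber $\pi^{-1}(\pi(x))$. If anything, your write-up is tidier---you attach each inclusion to the correct inequality direction, whereas the paper's brief proof pairs them the other way around; the substance, however, is identical.
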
\vspace*{-0.5cm}
\begin{proof}
	From the previous definition, for any collision-free trajectory $x \in X$ there exists $\pi_0 \in \Pi$ such that $\pi(x) = \pi_0$. Therefore, $\min_{x \in X}  J(x) \leq \min_{\pi_0 \in \Pi}\left( \min_{x \in \pi^{-1}(\pi_0)} J(x) \right)$. Reciprocally, for all $\pi_0 \in \Pi$, any $x \in \pi^{-1}(\pi_0)$ is guaranteed to be collision-free, leading to the reciprocal inequality.
\end{proof}

In other words, it is equivalent to find an optimal trajectory for the \ev, and to find an optimal path in the transition graph $\mathcal G^c$ and then the optimal trajectory corresponding to this path. These results can be extended to paths in the discrete graph $\mathcal G^d$; due to space limitations, details are not presented here and we only provide the following definition:
\begin{defn}[Interpretation of discrete graph paths]
	Let $\tau > 0$ be a discretization time step, $\pi_0^d = (E_{\sigma_0}^0, \threedots, E_{\sigma_{m+1}}^{m+1})$ be a path in $\mathcal G^d$ and $x(t)$ be a collision-free trajectory for the \ev. We say that $\pi_0^d$ corresponds to $x$ if, for all $p \in \{0 \twodots m\}$, $x(\theta_p) \in E_{\sigma_p}^p$ and $x([\theta_p, \theta_{p+1})) \subset E_{\sigma_p}^p \cup E_{\sigma_{p+1}}^p$.
\end{defn}

An interesting feature of this decomposition of the trajectory planning problem is that we effectively separate the discrete choice of a maneuver variant, and the search for an optimal control corresponding to this maneuver as was obtained in~\cite{Park2015} for path-planning. This second problem can be solved efficiently under certain assumptions on the vehicle dynamics. We consider an AGV with linear discrete dynamics $x_{p+1} = Ax_p + Bu_p$ for a state $x_p$ and a control $u_p \in U$ (with $U$ a convex polyhedron) with a discretization time step $\tau$, where $A$ and $B$ have constant coefficient. For a positive semi-definite matrix $Q$, a line vector $L$ and defining $X_p = [x_p^T, u_p^T]^T$, we consider a generic quadratic cost function $J(x,u) = \frac 1 2 X_p^T Q X_p + L^T X_p$. In this case, an optimal solution can be computed in polynomial time:% Such a divide-and-conquer strategy was proposed for path planning with good results in~\cite{Park2015}, reducing computation time by up to one order of magnitude.

\begin{theorem}[Polynomial time computability]\label{thm:polynomial-time}
	Let $\pi_0$ be a path in $\mathcal G^d$ and $x_0$ an initial AGV state. The optimal trajectory (and associated control sequence) $(X_p)$ starting from $x_0$ realizing $\displaystyle \min_{(x_p) \in \pi^{-1}(\pi_0), u_p \in U} J(x,u)$ can be computed in polynomial time in the number of obstacles and time steps.
\end{theorem}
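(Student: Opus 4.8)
The plan is to show that, once the path $\pi_0$ is fixed, the minimization $\min_{(x_p)\in\pi^{-1}(\pi_0),\,u_p\in U} J(x,u)$ becomes a convex quadratic program (QP) of size polynomial in the number of obstacles $N$ and the number of time steps $P$, and then invoke the known fact that convex QPs are solvable in polynomial time (e.g.\ by interior-point methods). The only thing to verify is therefore that (i) the objective is convex quadratic in the decision variables, (ii) the constraints are linear, and (iii) the number of variables and constraints is polynomial in $N$ and $P$.

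First I would set up the decision variables. Fix the path $\pi_0^d = (E_{\sigma_0}^0,\ldots,E_{\sigma_{m+1}}^{m+1})$ with $m+1 \le P$. The free variables are the controls $u_0,\ldots,u_{m}$, and the states $x_1,\ldots,x_{m+1}$ are determined from $x_0$ by the linear recursion $x_{p+1}=Ax_p+Bu_p$; equivalently one keeps all $x_p$ as variables and imposes the dynamics as linear equality constraints. With $X_p=[x_p^T,u_p^T]^T$, the cost $\sum_p\bigl(\tfrac12 X_p^TQX_p + L^TX_p\bigr)$ is a sum of convex quadratics (since $Q\succeq 0$), hence convex quadratic in the stacked variable. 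The control constraints $u_p\in U$ are linear since $U$ is a convex polyhedron. It remains to encode membership in the cells. By the definition of interpretation of discrete graph paths, we need $x(\theta_p)\in E_{\sigma_p}^p$ and $x([\theta_p,\theta_{p+1}))\subset E_{\sigma_p}^p\cup E_{\sigma_{p+1}}^p$. Since by assumption each $E_\sigma^p$ is convex and given by a linear-inequality system $A[X,t]^T\le b$, and since $\sigma_{\theta_p}^{-1}(\sigma_p)\subset\sigma_{\theta_p}^{-1}(\sigma_p)$, the inclusion of the segment between $x(\theta_p)$ and $x(\theta_{p+1})$ in a convex set reduces, by convexity, to requiring both endpoints to lie in it; so $x(\theta_p)\in\sigma_{\theta_p}^{-1}(\sigma_p)$ and $x(\theta_{p+1})\in\sigma_{\theta_p}^{-1}(\sigma_p)$ suffices — a finite set of linear inequalities on $x_p$ and $x_{p+1}$. (Here one uses that for a low-slip/linear model the vehicle moves forward in $s$, so the discrete polygonal trajectory stays within the stated union of two consecutive cells.)

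Next I would count. By \Cref{def:cor-signature}, the number of trapezes times $4^N$ bounds $|\PP^{t_1}|$, but the cell $\sigma_{\theta_p}^{-1}(\sigma_p)$ appearing at step $p$ along the given path is a \emph{single} convex polygon, and it is cut out by at most $O(N)$ linear constraints (four per obstacle plus the trapeze's four). Hence each step contributes $O(N)$ state-membership inequalities plus the dynamics equalities and the $O(1)$ (or polynomially many) facets of $U$. Summing over $p=0,\ldots,m\le P$ gives $O(NP)$ constraints and $O(P)$ variables. Therefore the QP has size polynomial in $N$ and $P$, and since it is convex it is solvable in polynomial time; this yields the claim.

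The main obstacle I anticipate is the constraint-encoding step (ii): one must argue carefully that the continuous-time inclusion $x([\theta_p,\theta_{p+1}))\subset E_{\sigma_p}^p\cup E_{\sigma_{p+1}}^p$ can be enforced by finitely many linear inequalities at the sample times. This rests on the convexity of the individual cells together with the forward-motion assumption (increasing $s$), which guarantees the straight-line interpolation between consecutive sampled states does not leave the union of the two relevant cells; without the linear-dynamics / monotonicity structure this reduction would fail, so this is exactly where the hypotheses on the vehicle model are used. Everything else — convexity of the objective from $Q\succeq0$, linearity of the dynamics and of $U$, and the polynomial size count — is routine.
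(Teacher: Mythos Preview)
Your proposal is correct and follows essentially the same route as the paper: reduce the problem, for a fixed path $\pi_0$, to a convex QP whose objective is quadratic (from $Q\succeq 0$), whose dynamics and control bounds are linear, and whose cell-membership conditions contribute $\mathcal O(NP)$ linear inequalities, then invoke polynomial-time solvability of convex QP. The paper's proof is terser---it simply asserts that $(x_p)\in\pi^{-1}(\pi_0)$ amounts to $\mathcal O(PN)$ linear constraints and quotes the $\mathcal O(n^3)$ QP bound---whereas you additionally discuss how to enforce the continuous-time inclusion $x([\theta_p,\theta_{p+1}))\subset E_{\sigma_p}^p\cup E_{\sigma_{p+1}}^p$; the paper sidesteps this by working with the discrete samples $(x_p)$ only, so your extra care there is not strictly needed for the argument as the paper frames it.
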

\begin{proof}
	We will show that this problem is an instance of convex quadratic programming (QP), which has a complexity $\mathcal O(n^3)$ where $n$ is the number of constraints~\cite{Vavasis2009}. First, the cost function $J$ is quadratic and convex. Second, vehicle dynamics and control bounds can be encoded as linear constraints. Moreover, the condition $(x_p) \in \pi^{-1}(\pi_0)$ corresponds to a set of $\mathcal O(PN)$ linear constraints, leading to a QP problem with complexity  $\mathcal O \left((PN)^3 \right)$ for $N$ obstacles over $P$ time steps, thus proving the announced result.
\end{proof}

\section{Simulation results\label{sec:simulation}}
To showcase the advantages of our approach and the axes for improvement, we present preliminary simulation results in the scenario of \Cref{fig:example-sit}. For illustration purposes, we consider a very simple second-order dynamics for the \ev as:
$	X_{p+1} = \left( \begin{smallmatrix}0 & I_2 \\ 0 & 0 \end{smallmatrix} \right) X_p + \left( \begin{smallmatrix}0 \\ I_2\end{smallmatrix} \right) u_p $
with $X_p = [s, r, \dot s, \dot r]_p^T$, $u_p = [a_{lon}, a_{lat}]_p^T$ and $I_2$ the $\mathbb R^2$ identity. To account for the nonholonomic constraints, we bound the lateral velocity as $|\dot r_p| \leq \alpha \dot s_p$ with $\alpha > 0$ a parameter, and we also require $a_{lon}$ and $a_{lat}$ to be bounded. The objective function  is chosen as $J = \sum_p \left(\dot s_p - s_p^{ref} \right)^2 + \dot r_p^2 + r_p^2$, and we use a planning horizon of \SI{10}{\second} with a time step $\tau = \SI{1}{\second}$, and a minimum time margin of \SI{1}{\second}. % Despite its simplicity, this model is generally suitable when driving on low-curvature roads (see, \eg, \cite{Qian2016}, which has been experimentally validated on a ring of radius \textit{circa} \SI{20}{\meter}).

%We propose to compare three different approaches to plan the maneuver of the \ev in this scenario. First, we choose the pure mixed-integer quadratic programming (MIQP) method from~\cite{Qian2016} as a ``reference''; this approach has been proven to be computationally efficient, but is not suited to take risk metrics such as the time margin into account. We then evaluate our partition-based approach with a custom tree exploration algorithm.
One difficulty in the proposed approach is the need to ensure that the selected optimal path in the transition graph is dynamically feasible. In this early implementation, we use a naive approach consisting in computing the optimal trajectory corresponding to each explored branch; as stated in \Cref{thm:polynomial-time}, this computation can be performed relatively fast. Additionally, we use a branch-and-bound technique to prune dynamically infeasible or poor quality branches. The trajectories from both algorithms are shown in \Cref{fig:traj}; \Cref{tab:comp-time} reports the computation time for all steps of our algorithm using a standard desktop computer. For comparison purposes, we added an implementation of the same problem using a pure mixed-integer quadratic programming (MIQP) method from~\cite{Qian2016}. We use an implementation of Seidel's linear programming algorithm~\cite{seidel1991small} to perform the partitioning, and Gurobi~\cite{gurobi} in version 7.5 to solve quadratic programming (trajectory optimization) and MIQP (from~\cite{Qian2016}) problems.

Although the MIQP is faster overall, it is not able to take notions such as time margin into account; this results in the \ev trying to overtake the blue vehicle (1) before the green one (2), which is a higher-risk maneuver that is excluded by our algorithm as shown in \Cref{fig:traj}; when removing the time margin constraints, both solvers converge to the same optimum. A likely explanation for the better performance of the pure MIQP method is that it can directly use vehicle dynamics to guide the exploration; future work will focus on designing an hybrid algorithm between these methods, to benefit from the advantages of both approaches.

\begin{figure}\centering
	{\includegtikz[width=0.95\columnwidth]{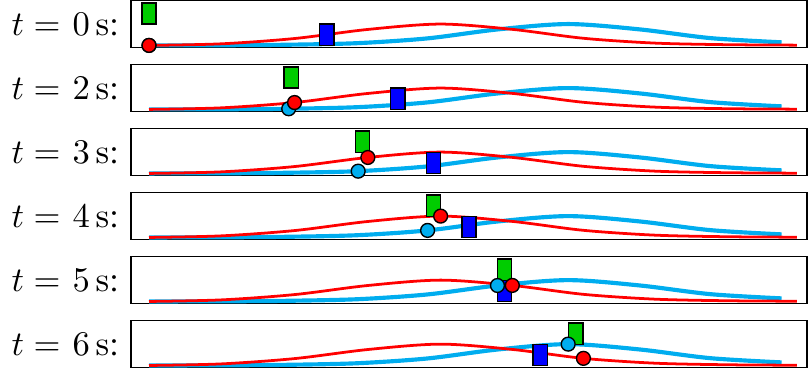}}
	\caption{Trajectories computed by both algorithms (MIQP in red, our graph-based approach in thicker cyan) for a time margin of \SI{1}{\second}. The rectangles correspond to the vehicles of \Cref{fig:example-sit}; the circles to the \ev position.\label{fig:traj}}
\end{figure} 

\begin{table}\centering
	\caption{Summary of computation time (average over 100 iterations)\label{tab:comp-time}}
	\begin{tabular}{l l l l}
		\toprule
		Algorithm & Comp. time & Objective val. \\
		\midrule
		Partitioning &\SI{2.5}{\milli\second} & - \\
		Graph exploration (\SI{1}{\second} margin) &  \SI{27.6}{\milli\second} & 13.89 \\
		Optimal path computation &  $< \SI{1}{\milli\second}$ & - \\
		\midrule
		MIQP~\cite{Qian2016} & \SI{14.0}{\milli\second} & 13.58 \\ 
		Our graph exploration (no margin) &  \SI{28.4}{\milli\second} & 13.58 \\
		%		Hybrid & \SI{1}{\second} & \SI{23}{\milli\second} & 13.89 \\
		\bottomrule
	\end{tabular}
\end{table}

\section{Conclusion\label{sec:conclusion}}
This article generalized the divide-and-conquer approach used in~\cite{Park2015} in two dimensions, to the 3D space-time for on-road navigation of autonomous ground vehicles in presence of moving obstacles. We described a systematic method to partition the collision-free space-time in the presence of fixed or moving obstacles, and we provided a graph representation of all possible collision-free trajectories. This approach allows to treat the combinatorial problem of optimal trajectory planning in two steps: first, a path-finding problem in a graph, and then a simple optimization that can be performed in polynomial time in the number of obstacles for any quadratic cost function. Moreover, we introduced a notion of time margin and showed that our graph-based approach can easily take into account margin of error in the execution for a particular maneuver. Coupled with additional similar metrics, we believe that our approach can have useful applications for planning under prediction and control uncertainty, notably in the frame of stochastic decision-making. Future work will also focus on improving our graph exploration algorithm to allow faster computation. %We believe that the graph representation of the problem will make the use of stochastic decision-making techniques much easier. A second area of research is the development of custom branch-and-bound heuristics to design efficient exploration algorithms for the transition graph. In the meantime, we expect to implement an actual behavior planning algorithm in a realistic simulation setting to evaluate the benefits of our approach over more classic methods.

\bibliographystyle{IEEEtranurl}
\balance
\bibliography{behaviorplanning}

\begin{thebibliography}{10}
\providecommand{\url}[1]{#1}
\csname url@rmstyle\endcsname
\providecommand{\newblock}{\relax}
\providecommand{\bibinfo}[2]{#2}
\providecommand\BIBentrySTDinterwordspacing{\spaceskip=0pt\relax}
\providecommand\BIBentryALTinterwordstretchfactor{4}
\providecommand\BIBentryALTinterwordspacing{\spaceskip=\fontdimen2\font plus
\BIBentryALTinterwordstretchfactor\fontdimen3\font minus
  \fontdimen4\font\relax}
\providecommand\BIBforeignlanguage[2]{{%
\expandafter\ifx\csname l@#1\endcsname\relax
\typeout{** WARNING: IEEEtran.bst: No hyphenation pattern has been}%
\typeout{** loaded for the language `#1'. Using the pattern for}%
\typeout{** the default language instead.}%
\else
\language=\csname l@#1\endcsname
\fi
#2}}

\bibitem{Bender2015}
\BIBentryALTinterwordspacing
P.~Bender, O.~S. Tas, J.~Ziegler, and C.~Stiller, ``{The Combinatorial Aspect
  of Motion Planning: Maneuver Variants in Structured Environments},'' pp.
  1386--1392, 2015.
\BIBentrySTDinterwordspacing

\bibitem{Kong2015}
\BIBentryALTinterwordspacing
J.~Kong, M.~Pfeiffer, G.~Schildbach, and F.~Borrelli, ``{Kinematic and dynamic
  vehicle models for autonomous driving control design},'' in \emph{2015 IEEE
  Intelligent Vehicles Symposium (IV)}.\hskip 1em plus 0.5em minus 0.4em\relax
  IEEE, jun 2015, pp. 1094--1099.
\BIBentrySTDinterwordspacing

\bibitem{Falcone2007b}
\BIBentryALTinterwordspacing
P.~Falcone, F.~Borrelli, J.~Asgari, H.~E. Tseng, and D.~Hrovat, ``{A model
  predictive control approach for combined braking and steering in autonomous
  vehicles},'' in \emph{2007 Mediterranean Conference on Control {\&}
  Automation}.\hskip 1em plus 0.5em minus 0.4em\relax IEEE, jun 2007, pp. 1--6.
\BIBentrySTDinterwordspacing

\bibitem{Abbas2014}
\BIBentryALTinterwordspacing
M.~A. Abbas, R.~Milman, and J.~M. Eklund, ``{Obstacle avoidance in real time
  with Nonlinear Model Predictive Control of autonomous vehicles},'' in
  \emph{2014 IEEE 27th Canadian Conference on Electrical and Computer
  Engineering (CCECE)}.\hskip 1em plus 0.5em minus 0.4em\relax IEEE, may 2014,
  pp. 1--6.
\BIBentrySTDinterwordspacing

\bibitem{Cowlagi2012}
\BIBentryALTinterwordspacing
R.~V. Cowlagi and P.~Tsiotras, ``{Hierarchical motion planning with kinodynamic
  feasibility guarantees: Local trajectory planning via model predictive
  control},'' in \emph{2012 IEEE International Conference on Robotics and
  Automation}.\hskip 1em plus 0.5em minus 0.4em\relax IEEE, may 2012, pp.
  4003--4008.
\BIBentrySTDinterwordspacing

\bibitem{qian2016hierarchical}
X.~Qian, A.~De~La~Fortelle, and F.~Moutarde, ``A hierarchical model predictive
  control framework for on-road formation control of autonomous vehicles,'' in
  \emph{Intelligent Vehicles Symposium (IV), 2016 IEEE}.\hskip 1em plus 0.5em
  minus 0.4em\relax IEEE, 2016, pp. 376--381.

\bibitem{Schouwenaars2002}
\BIBentryALTinterwordspacing
T.~Schouwenaars, {\'{E}}.~F{\'{e}}ron, and J.~How, ``{Safe receding horizon
  path planning for autonomous vehicles},'' \emph{Proceedings of the Annual
  Allerton Conference on Communication Control and Computing}, vol.~40, no.~1,
  pp. 295--304, 2002.
\BIBentrySTDinterwordspacing

\bibitem{Anderson2011}
S.~J. Anderson, S.~C. Peters, T.~E. Pilutti, and K.~Iagnemma, ``{Design and
  development of an optimal-control-based framework for trajectory planning,
  threat assessment, and semi-autonomous control of passenger vehicles in
  hazard avoidance scenarios},'' \emph{Springer Tracts in Advanced Robotics},
  vol.~70, no. STAR, pp. 39--54, 2011.

\bibitem{Qian2016}
\BIBentryALTinterwordspacing
X.~Qian, F.~Altch\'e, P.~Bender, C.~Stiller, and A.~{de La Fortelle},
  ``{Optimal trajectory planning for autonomous driving integrating logical
  constraints: An MIQP perspective},'' in \emph{2016 IEEE 19th International
  Conference on Intelligent Transportation Systems (ITSC)}.\hskip 1em plus
  0.5em minus 0.4em\relax IEEE, nov 2016, pp. 205--210.
\BIBentrySTDinterwordspacing

\bibitem{karp1972reducibility}
R.~M. Karp, ``Reducibility among combinatorial problems,'' in \emph{Complexity
  of computer computations}.\hskip 1em plus 0.5em minus 0.4em\relax Springer,
  1972, pp. 85--103.

\bibitem{Gregoire2014}
\BIBentryALTinterwordspacing
J.~Gregoire, S.~Bonnabel, and A.~{de La Fortelle}, ``{Priority-based
  intersection management with kinodynamic constraints},'' in \emph{2014
  European Control Conference (ECC)}.\hskip 1em plus 0.5em minus 0.4em\relax
  IEEE, jun 2014, pp. 2902--2907.
\BIBentrySTDinterwordspacing

\bibitem{Park2015}
J.~Park, S.~Karumanchi, and K.~Iagnemma, ``{Homotopy-Based Divide-and-Conquer
  Strategy for Optimal Trajectory Planning via Mixed-Integer Programming},''
  \emph{IEEE Transactions on Robotics}, vol.~31, no.~5, pp. 1101--1115, 2015.

\bibitem{Constantin2014}
\BIBentryALTinterwordspacing
A.~Constantin, J.~Park, and K.~Iagnemma, ``{A margin-based approach to threat
  assessment for autonomous highway navigation},'' in \emph{2014 IEEE
  Intelligent Vehicles Symposium Proceedings}, vol.~12, no.~4.\hskip 1em plus
  0.5em minus 0.4em\relax IEEE, jun 2014, pp. 234--239.
\BIBentrySTDinterwordspacing

\bibitem{Lefevre2012}
\BIBentryALTinterwordspacing
S.~Lefevre, C.~Laugier, and J.~Ibanez-Guzman, ``{Evaluating risk at road
  intersections by detecting conflicting intentions},'' in \emph{2012 IEEE/RSJ
  International Conference on Intelligent Robots and Systems}.\hskip 1em plus
  0.5em minus 0.4em\relax IEEE, oct 2012, pp. 4841--4846.
\BIBentrySTDinterwordspacing

\bibitem{VandenBerg2011}
\BIBentryALTinterwordspacing
J.~van~den Berg, P.~Abbeel, and K.~Goldberg, ``{LQG-MP: Optimized path planning
  for robots with motion uncertainty and imperfect state information},''
  \emph{The International Journal of Robotics Research}, vol.~30, no.~7, pp.
  895--913, jun 2011.
\BIBentrySTDinterwordspacing

\bibitem{Wang2015}
\BIBentryALTinterwordspacing
Q.~Wang, T.~Weiskircher, and B.~Ayalew, ``{Hierarchical Hybrid Predictive
  Control of an Autonomous Road Vehicle},'' in \emph{2015 Dynamic Systems and
  Control Conference}.\hskip 1em plus 0.5em minus 0.4em\relax ASME, oct 2015.
\BIBentrySTDinterwordspacing

\bibitem{Wang2016}
Q.~Wang, B.~Ayalew, and T.~Weiskircher, ``{Optimal assigner decisions in a
  hybrid predictive control of an autonomous vehicle in public traffic},''
  \emph{Proceedings of the American Control Conference}, vol. 2016-July, pp.
  3468--3473, 2016.

\bibitem{Ono2015}
\BIBentryALTinterwordspacing
M.~Ono, G.~Droge, H.~Grip, O.~Toupet, C.~Scrapper, and A.~Rahmani,
  ``{Road-following formation control of autonomous ground vehicles},'' in
  \emph{2015 54th IEEE Conference on Decision and Control (CDC)}, no.
  Cdc.\hskip 1em plus 0.5em minus 0.4em\relax IEEE, dec 2015, pp. 4714--4721.
\BIBentrySTDinterwordspacing

\bibitem{TianyuGu2015}
\BIBentryALTinterwordspacing
T.~Gu, J.~Atwood, C.~Dong, J.~M. Dolan, and {Jin-Woo Lee}, ``{Tunable and
  stable real-time trajectory planning for urban autonomous driving},'' in
  \emph{2015 IEEE/RSJ International Conference on Intelligent Robots and
  Systems (IROS)}.\hskip 1em plus 0.5em minus 0.4em\relax IEEE, sep 2015, pp.
  250--256.
\BIBentrySTDinterwordspacing

\bibitem{altche2017}
F.~Altch{\'e} and A.~de~La~Fortelle, ``An {LSTM} network for highway trajectory
  prediction,'' in \emph{Intelligent Transportation Systems (ITSC), 2017 IEEE
  19th International Conference on}.\hskip 1em plus 0.5em minus 0.4em\relax
  IEEE, 2017.

\bibitem{erdmann1986multiple}
M.~Erdmann and T.~Lozano-Perez, ``On multiple moving objects,'' in
  \emph{Robotics and Automation. Proceedings. 1986 IEEE International
  Conference on}, vol.~3.\hskip 1em plus 0.5em minus 0.4em\relax IEEE, 1986,
  pp. 1419--1424.

\bibitem{kockara2007collision}
S.~Kockara, T.~Halic, K.~Iqbal, C.~Bayrak, and R.~Rowe, ``Collision detection:
  A survey,'' in \emph{Systems, Man and Cybernetics, 2007. ISIC. IEEE
  International Conference on}.\hskip 1em plus 0.5em minus 0.4em\relax IEEE,
  2007, pp. 4046--4051.

\bibitem{Vavasis2009}
\BIBentryALTinterwordspacing
S.~A. Vavasis, \emph{Complexity theory: quadratic programming Complexity
  Theory: Quadratic Programming}.\hskip 1em plus 0.5em minus 0.4em\relax
  Boston, MA: Springer US, 2009, pp. 451--454.
\BIBentrySTDinterwordspacing

\bibitem{seidel1991small}
R.~Seidel, ``Small-dimensional linear programming and convex hulls made easy,''
  \emph{Discrete \& Computational Geometry}, vol.~6, no.~1, pp. 423--434, 1991.

\bibitem{gurobi}
\BIBentryALTinterwordspacing
{Gurobi Optimization, Inc.}, ``Gurobi optimizer reference manual,'' 2017.
\BIBentrySTDinterwordspacing

\end{thebibliography}

\end{document}